\newtheorem{assumption}{Assumption}
\newtheorem{theorem}{Theorem}
\newtheorem{lemma}{Lemma}
\newtheorem{remark}{Remark}
\title{\LARGE \bf
Distributed Model Predictive Control for Heterogeneous Platoons with Affine Spacing Policies and Arbitrary Communication Topologies
}
\author{Michael H. Shaham$^{1}$ and Ta\c{s}k\i n Pad{\i}r$^{1,2}$
\thanks{Research was sponsored by the DEVCOM Analysis Center and was accomplished under Cooperative Agreement Number W911NF-22-2-001.}
\thanks{$^{1}$Institute for Experiential Robotics,
        Northeastern University, Boston, MA 02116 USA
        {\tt \{shaham.m, t.padir\}@northeastern.edu}}%
\thanks{$^{2}$Ta\c{s}k\i n Pad{\i}r holds concurrent appointments as a Professor of Electrical and Computer Engineering at Northeastern University and as an Amazon Scholar. This paper describes work performed at Northeastern University and is not associated with Amazon.}
}
\begin{document}

\maketitle
\thispagestyle{empty}
\pagestyle{empty}


\begin{abstract}

This paper presents a distributed model predictive control (DMPC) algorithm for a heterogeneous platoon using arbitrary communication topologies, provided each vehicle can communicate with a preceding vehicle in the platoon. The proposed DMPC algorithm can accommodate any spacing policy that is affine in a vehicle’s velocity, which includes constant distance or constant time headway spacing policies. By analyzing the total cost for the entire platoon, a sufficient condition is derived to ensure platoon asymptotic stability. Simulation experiments with a platoon of 50 vehicles and hardware experiments with a platoon of four 1/10th-scale vehicles validate the algorithm and compare performance under different spacing policies and communication topologies. Code for the experiments and a video demonstration of the hardware experiment can be found at~\url{https://www.github.com/river-lab/dmpc_itsc_2024.git}.

\end{abstract}


\section{INTRODUCTION}

Platooning technology promises to improve safety~\cite{axelsson2017safety}, reduce fuel emissions~\cite{liang2016hdvfuel}, and increase traffic throughput~\cite{smith2020throughput}. Since the California PATH program in the 1990s~\cite{shladover1991path}, platooning has received considerable attention around the globe through various efforts such as the SARTRE platooning program~\cite{robinson2010sartre} in Europe, the Energy ITS project~\cite{tsugawa2011energyits} in Japan, and the global Grand Cooperative Driving Challenge that took place in 2011~\cite{ploeg2012gcdc} and 2016~\cite{englund2016gcdc}. These initiatives have laid the groundwork for significant advancements in platooning technology, highlighting its potential benefits.

The platooning problem can be decomposed into four components~\cite{li2015fourcomponent}: vehicle dynamics, communication topology, control algorithms, and spacing policies. It is practical to consider heterogeneous platoons, meaning each vehicle can have different dynamics. The information that each vehicle has access to defines the communication topology. Two common and practical communication topologies are predecessor following (PF), where each vehicle has access to information only from the vehicle in front of it, and bidirectional (BD), where each vehicle has access to information from the vehicles in front of and behind it.

The two most widely studied platooning controllers are linear feedback and distributed model predictive control (DMPC). Linear feedback has the advantage of relying solely on sensor data (in practice, this data comes from a radar, which provides the preceding vehicle's distance and relative speed---data required for adaptive cruise control), whereas DMPC requires communication between vehicles. The distances between vehicles in a platoon are determined by the spacing policy. Linear feedback controllers with a PF topology and a constant distance headway (CDH) spacing policy (where vehicles maintain a predefined distance to its predecessor) are well known to cause string instability~\cite{seiler2004distprop}, \ie, errors seen by the leader propagate and worsen down the platoon. A constant time headway (CTH) policy (where vehicles remain a set time gap behind the predecessor) alleviates this issue~\cite{naus2010stringstable,ploeg2014lpstring}.

Research on linear feedback controllers for platoons began as early as the 1960s~\cite{levine1966optimalerror}. Since then, there have been many theoretical insights related to different spacing policies and communication topologies~\cite{swaroop1996string,naus2010stringstable,hao2013stabdoubleint,zheng2016stability}. More recently, DMPC~\cite{camponogara2002dmpc} has gained attention, and its stability properties have been studied for both linear~\cite{venkat2005stabdmpc} and nonlinear~\cite{dunbar2007drhcnonlinear} systems. Different formulations of the DMPC problem have also been applied to various multi-vehicle control problems, including trajectory optimization~\cite{kuwata2007dmpcmultitraj} and formation stabilization~\cite{dunbar2006drhcformationstab}.

In the context of platoons, the authors of~\cite{dunbar2012drhcplatoon} developed a DMPC controller with a CDH spacing policy that is both asymptotically stable and string stable for PF topologies when each vehicle knows the desired velocity of the platoon a priori. In~\cite{zheng2017distributed}, the work of~\cite{dunbar2012drhcplatoon} was extended by guaranteeing asymptotic stability for arbitrary unidirectional topologies, meaning each vehicle communicates with at least one preceding vehicle, and without followers needing to know the platoon’s goal velocity a priori. More recently, in~\cite{shaham2024design}, two DMPC algorithms were investigated in simulation and on hardware using 1/10th-scale vehicles for PF communication topologies and CDH spacing policies. In this work, the DMPC methods showed favorable performance compared to linear feedback policies and performed significantly better with respect to spacing error as the platoon size scaled up.

This paper proposes a DMPC algorithm that guarantees the asymptotic stability of a platoon using affine spacing policies (including CDH and CTH) and arbitrary communication topologies, provided each vehicle can communicate with a preceding vehicle. Similar to~\cite{zheng2017distributed}, the algorithm does not require prior knowledge of the lead vehicle’s desired velocity. This work builds on ~\cite{zheng2017distributed} by considering a broader set of permissible communication topologies and enabling the use of CTH spacing policies. The proposed algorithm is validated in simulation with 50 vehicles and on hardware using four 1/10th-scale vehicles.


\section{PLATOON MODELING}

\subsection{Vehicle Dynamics}

Consider a heterogeneous platoon of $N + 1$ vehicles indexed by $i = 0, \ldots, N$ where vehicle $0$ is the (virtual) leader and vehicles $1$ through $N$ are the followers. We assume that $j > i$ implies vehicle $i$ precedes vehicle $j$. Each vehicle's dynamics are given by
\begin{gather*}
    x_i(t+1) = A_i x_i(t) + B_i u_i(t), \quad y_i(t) = C x_i(t), \\
    A_i = \bmat{1 & \dt & 0 \\ 0 & 1 & \dt \\ 0 & 0 & 1 - \frac{\dt}{\tau_i}}, \quad B_i = \bmat{0 \\ 0 \\ -\frac{1}{\tau_i}}, \\
    C = \bmat{1 & 0 & 0 \\ 0 & 1 & 0}
\end{gather*}
where $x_i = (p_i,\, v_i,\, a_i) \in \R^n$ is the vehicle's state given by position, velocity, and acceleration; $u_i \in \R^m$ is the control input (desired acceleration); $y_i \in \R^p$ is the output (position and velocity); $\dt$ is the discrete timestep; and $\tau_i$ is the inertial delay of the vehicle's longitudinal dynamics. Note that these dynamics can be obtained via feedback linearization of a nonlinear vehicle longitudinal dynamics model~\cite{zheng2016stability}. All results in this paper are easily extended to the case $C = I$, where $I$ is the identity matrix.

\subsection{Communication Topology}

The communication topology for the platoon is modeled using a graph $\calG = (\calV, \calE)$ with $\calV = \{0, 1, \ldots , N\}$ and $\calE \subseteq \calV \times \calV$. If vehicle $i$ receives information from vehicle $j$, then $(j, i) \in \calE$. To describe the communication topology, we define a set of matrices. First, the adjacency matrices $M,\, M_{\text{pre}} \in \R^{N \times N}$ describe how the following vehicles communicate with one another. The entries of $M$, denoted $m_{i,j}$ for $i,\, j \in \{1, \ldots , N\}$, are given by
\begin{equation*}
    m_{i,j} = \begin{cases}
        1, & (j, i) \in \calE, \\
        0, & \text{otherwise}.
    \end{cases}
\end{equation*}
Note that $M$ only describes communication information between follower vehicles and does not include information about which vehicles communicate with the leader. The entries of $M_\text{pre}$ are the same as $M$ when $j < i$ and are all zero when $j > i$, \ie, $M_\text{pre}$ indicates which preceding vehicles in the platoon a vehicle communicates with and will always be lower triangular with zeros on the diagonal.

Next, we define the diagonal in-degree matrix $D \in \R^{N \times N}$ as
\begin{equation*}
    D = \diag \left( \littlesumx_{j = 1}^N m_{1,j}, \ldots , \littlesumx_{j = 1}^N m_{N,j} \right),
\end{equation*}
\ie, the $i$th diagonal element of $D$ describes how many other follower vehicles vehicle $i$ receives information from. We define $D_\text{pre}$ in a similar way, except the summation occurs over the elements of $M_\text{pre}$. The Laplacian matrices $L,\, L_\text{pre} \in \R^{N \times N}$ are defined as $L = D - M$ and $L_\text{pre} = D_\text{pre} - M_\text{pre}$. Finally, we define the diagonal pinning matrix $P \in \R^{N \times N}$ as $P = \diag(p_1, \ldots , p_N)$, where $p_i \in \{0, 1\}$ is 1 if vehicle $i$ receives information from the leader and is 0 otherwise.

It will also be useful to describe different sets for each follower vehicle $i$. These sets are the receive set $\calR_i$, the share set $\calS_i$, the leader accessible set $\calP_i$, the information set $\calI_i$, and the predecessor information set $\calI_{i, \text{pre}}$, defined as
\begin{equation}
\label{eq:important_sets}
\begin{split}
    \calR_i & = \set{j}{(j, i) \in \calE, j \in \{1, \ldots, N\}} \\
    \calS_i & = \set{j}{(i, j) \in \calE, j \in \{1, \ldots, N\}} \\
    \calP_i & = \begin{cases}
        \{0\}, & \text{if } (0, i) \in \calE \\
        \emptyset, & \text{otherwise}
    \end{cases} \\
    \calI_i & = \calR_i \cup \calP_i \\
    \calI_{i, \text{pre}} & = \set{j \in \calI_i}{j < i}.
\end{split}
\end{equation}

\subsection{Platooning Control Objective}

The goal of platooning is to have each vehicle drive at the same velocity as the leader while maintaining some desired spacing between adjacent vehicles. Let $d_{i,i-1}: \R \to \R$ be an affine function describing the desired distance between vehicle $i$ and its predecessor $i-1$. For a CDH spacing policy, we have $d_{i, i-1}(v) = \delta_{i,i-1}$, where $\delta_{i, i-1} > 0$ is the desired spacing between vehicle $i$ and its predecessor. For a CTH spacing policy with a minimum safety distance $\delta_\text{safe}$, we have $d_{i,i-1}(v) = \delta_h v + \delta_\text{safe}$, where $\delta_h \geq 0$ is the time headway, $v$ is the vehicle's velocity, and $\delta_\text{safe} > 0 $ is the required spacing when the vehicle's are at rest. Thus, the affine spacing policy given by $\delta_h v + \delta_\text{safe}$ is a direct extension of CDH spacing policies. For the rest of this work, we define the spacing policy for each vehicle as
\begin{equation}
    \label{eq:spacing_policy}
    d_{i, i-1}(v_i) = \delta_h v_i + \delta_\text{safe}, \ i = 1, \ldots, N.
\end{equation}
Using~\cref{eq:spacing_policy}, we see that
\begin{equation}
    \label{eq:spacing_to_leader}
    d_{i, 0}(v_i) = \littlesumx_{j=1}^i \left(\delta_h v_j(t) + \delta_\text{safe} \right).
\end{equation}

Note that the desired spacing to the leader depends also on the velocities of preceding vehicles and is therefore unknown unless vehicle $i$ communicates with every preceding vehicle. It is also possible for each vehicle to decide its own time headway and safe distance, but for notational clarity, we use the same quantities for each vehicle. In practice, however, different spacing policies for each vehicle may be useful. For example, we would likely want $\delta_h = \delta_\text{safe} = 0$ for vehicle 1 since we assume vehicle 0 is a virtual leader. Furthermore, if a smaller vehicle is following a heavy-duty vehicle, the time headways could likely be reduced since smaller vehicles can decelerate much quicker than heavy-duty vehicles and avoid accidents in emergency scenarios.

Since each vehicle's desired speed is the leader's speed, we have $v_{i}^\text{des}(t) = v_0(t)$. Applying this with~\cref{eq:spacing_to_leader}, we define the desired output of vehicle $i$ as
\begin{equation}
    \label{eq:desired_output}
    y_{i}^\text{des}(t) = y_0(t) - \bmat{i(\delta_h v_0(t) + \delta_\text{safe}) \\ 0}
\end{equation}
The goal of the platooning controller is to guarantee asymptotic stability, meaning we want to design a controller such that as $t \to \infty$,
\begin{equation*}
    y_i(t) \to y_{i}^\text{des}(t) 
\end{equation*}
for $i = 1, \ldots, N$.


\section{THE DISTRIBUTED MODEL PREDICTIVE CONTROL ALGORITHM}

In this section, we describe the DMPC algorithm. Let $\calI_i = \{i_1, \ldots, i_{\abs{\calI_i}}\}$ and $y_{-i}(t) \defeq (y_{i_1}(t), \ldots, y_{i_{\abs{\calI_i}}}(t))$, \ie, $y_{-i}(t)$ is the vector of information about vehicle $i$'s neighbors at timestep $t$. Next, we define three different trajectories over the prediction horizon $H$ (prediction timesteps are $k=0, \ldots, H$ for states and outputs, and $k = 0, \ldots, H-1$ for inputs) at timestep $t$:
\begin{itemize}
    \item $x_{i,t}^p(k),\, y_{i,t}^p(k),\, u_{i,t}^p(k)$: predicted state/output/input
    \item $x_{i,t}^\star(k),\, y_{i,t}^\star(k),\, u_{i,t}^\star(k)$: optimal state/output/input
    \item $x_{i,t}^a(k),\, y_{i,t}^a(k),\, u_{i,t}^a(k)$: assumed state/output/input
\end{itemize}
Predicted trajectories are the optimization variables in the DMPC optimization problem, optimal trajectories are the optimal solutions to the DMPC optimization problem, and assumed trajectories are the trajectories shared with vehicles in $\calS_i$ and received from vehicles in $\calI_i$.

In DMPC, each vehicle $i = 1, \ldots, N$ solves the following optimal control problem at each timestep $t$:
\begin{equation}
    \label{eq:dmpc_opt}
    \tag{$\calO_i$}
    \begin{array}{ll}
        \text{min.} & J_i(y_{i,t}^p, u_{i,t}^p, y_{i,t}^a, y_{-i,t}^a) \\
        & = \littlesumx_{k=0}^{H-1} l_i(y_{i,t}^p(k), u_{i,t}^p(k), y_{i,t}^a(k), y_{-i,t}^a(k)) \\
        \text{s.t.} & x_{i,t}^p(0) = x_i(t) \\
        & x_{i,t}^p(k+1) = A_i x_{i,t}^p(k) + B_i u_{i,t}^p(k),\ k \in [H-1] \\
        & y_{i,t}^p(k) = C x_{i,t}^p(k),\ k \in [H] \\
        & u_{i,t}^p(k) \in \calU_i,\ k \in [H-1] \\
        & y_{i,t}^p(H) = \frac{1}{\abs{\calI_{i, \text{pre}}}} \littlesumx\limits_{j \in \calI_{i, \text{pre}}} \left( y_{j,t}^a(H) - \tilde d_{i, j}(v_{j,t}^a(H)) \right) \\
        & a_{i,t}^p(H) = 0
    \end{array}
\end{equation}
where $[H] \defeq \{0, 1, \ldots, H\}$, $\calU_i$ is a convex set describing the control limits, and $J_i$ is the cost we are optimizing over that is the sum of stage costs $l_i$. Here, the optimization variables are those with the superscript $p$, and the optimal solution will be denoted with the superscript $\star$. The variables with the superscript $a$ are problem data that depend on the solution to~\cref{eq:dmpc_opt} at timestep $t - 1$. We define the function $\tilde d_{i, j}: \R \to \R^p$ as
\begin{equation}
    \label{eq:ij_distance}
    \tilde d_{i,j}(v) = \bmat{(i - j)\left(\delta_h v + \delta_\text{safe}\right) \\ 0}.
\end{equation}

The stage cost $l_i$ is given by (dropping the time and prediction timestep indexing for brevity)
\begin{multline}
    \label{eq:dmpc_cost}
    l_i(y_i^p, u_i^p, y_i^a, y_{-i}^a) = q_{i,i} \normm{y_i^p - y_i^a} + \\
    \littlesumx_{j \in \calI_i} q_{i,j} \normm{y_i^p - y_j^a + \tilde d_{i, j}(v_i^p)} + r_i \normm{u_i^p}^2
\end{multline}
where $q_{i,j} > 0$ for all $(i, j) \in \calE$ and $r_i > 0$ for all $i$. The norms used in $\ell_i$ can be any norm. Note that $q_{i, i}$ represents the weight vehicle $i$ places on staying near its assumed trajectory, and $q_{i, j}$ represents the weight vehicle $i$ places on tracking vehicle $j$'s assumed output minus the distance offset. How the $q_{i,j}$ are selected will impact the performance and stability of the platoon. For instance, if we select $q_{i,j} = 0$ for all $j \neq i$, then the vehicles in the platoon will not track their neighbors. Further, if we select $q_{i,i} = 0$, the vehicles will not penalize changing their trajectory based on their assumed trajectory, meaning how vehicle $i$ acts could be very different from how vehicle $j \in \calS_i$ expects vehicle $i$ to act.

The full DMPC algorithm is given in~\cref{alg:dmpc}.
\begin{algorithm}
    \caption{Platoon DMPC}\label{alg:dmpc}
    \textbf{Given:} $q_{i,j} > 0 \ \forall (j, i) \in \calE$; $r_i > 0 \ \forall i$, $\delta_h \geq 0$, $\delta_\text{safe} > 0$ \\
    \textbf{Assume:} Each vehicle $i$ has access to $x_i(t)$ at each timestep \\
    \textbf{Initialization:} At timestep $t = 0$, 
    \begin{align*}
        x_{i,0}^a(0) & = x_i(0), \\
        u_{i,0}^a(k) & = 0,\ k = 0, \ldots, H-1, \\
        x_{i,0}^a(k+1) & = A_i x_{i,0}^a(k) + B_i u_{i,0}^a(k),\ k = 0, \ldots, H-1, \\
        y_{i,0}^a(k) & = C x_{i,0}^a(k),\ k = 0, \ldots, H.
    \end{align*}
    \textbf{DMPC loop:} At each timestep $t = 1, 2, \ldots \,$, each vehicle $i=1, \ldots, N$ performs the following:
    \begin{enumerate}
        \item Solve problem~(\ref{eq:dmpc_opt}) with problem data given by $x_i(t)$, $y_{i,t}^a$, and $y_{-i, t}^a$ to obtain $x_{i,t}^\star$, $y_{i,t}^\star$, $u_{i,t}^\star$. \label{item:step_1}
        \item Compute the assumed control input for the next timestep using
        \begin{equation}
            \label{eq:assumed_input_next}
            u_{i,t+1}^a(k) = \begin{cases}
                u_{i,t}^\star(k+1) & k = 0, \ldots, H-2 \\
                0 & k = H-1.
            \end{cases}
        \end{equation}
        Then compute the assumed states and outputs for next timestep using
        \begin{align}
            x_{i,t+1}^a(k) = \begin{cases}
                x_{i,t}^\star(k+1) & k = 0, \ldots, H-1 \\
                A_i x_{i,t}^\star(H) & k = H
            \end{cases} \label{eq:assumed_states_next} \\
            y_{i,t+1}^a(k) = \begin{cases}
                y_{i,t}^\star(k+1) & k = 0, \ldots, H-1 \\
                C x_{i,t+1}^a(H) & k = H.
            \end{cases} \label{eq:assumed_outputs_next}
        \end{align}
        \item Transmit $y_{i,t+1}^a$ to each vehicle in the share set $\calS_i$. Receive $y_{j,t+1}^a$ from each vehicle in the information set $\calI_i$. 
        \item Implement $u_{i,t}^\star(0)$, increment the timestep, and go to~\cref{item:step_1}.
    \end{enumerate}
\end{algorithm}


\section{STABILITY ANALYSIS}

This section derives sufficient conditions on $q_{i,j}$, $i,\, j \in \{1, \ldots, N\}$, to guarantee platoon asymptotic stability.
\begin{assumption}
\label{ass:spanning_tree}
The graph $\calG = (\calV, \calE)$ contains a spanning tree rooted at the leader (meaning there exists a sequence of directed edges that connect vehicle 0 to vehicle $i$ for all $i$), and for each $i = 1, \ldots, N$ there exists $j < i$ such that $(j, i) \in \calE$.
\end{assumption}

This assumption implies that information from the leader will propagate to each vehicle in the platoon, and each vehicle in the platoon receives information from at least one vehicle preceding it. Furthermore, to make the stability argument, we will assume throughout this section that the leader's velocity is constant.


\subsection{Terminal Constraint Analysis}


The terminal constraints in the optimization problem~(\ref{eq:dmpc_opt}) will enable us to use recursive feasibility to prove stability.

\begin{lemma}[\cite{horn_johnson_1985}]
    \label{lemma:kronecker_product_eigenvals}
    Suppose $A \in \R^{n \times n}$ and $B \in \R^{m \times m}$ have eigenvalues $\lambda_1, \ldots, \lambda_n$ and $\mu_1, \ldots, \mu_m$, respectively. Then the eigenvalues of $A \otimes B$ (the Kronecker product of $A$ and $B$) are given by 
    \begin{equation*}
        \lambda_i \mu_j,\ i = 1, \ldots, n,\ j = 1, \ldots, m.
    \end{equation*}
\end{lemma}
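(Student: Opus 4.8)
The plan is to reduce everything to the \emph{mixed-product property} of the Kronecker product, namely that $(A \otimes B)(C \otimes D) = (AC) \otimes (BD)$ whenever the products $AC$ and $BD$ are defined, together with the observation that $\otimes$ preserves upper-triangular structure. As a warm-up I would record the quick special case: if $Av = \lambda_i v$ and $Bw = \mu_j w$ with $v, w \neq 0$, then the mixed-product property gives
\begin{equation*}
    (A \otimes B)(v \otimes w) = (Av) \otimes (Bw) = \lambda_i \mu_j \, (v \otimes w),
\end{equation*}
so each $\lambda_i \mu_j$ is an eigenvalue of $A \otimes B$. This already proves the claim when $A$ and $B$ are diagonalizable, since one then gets $nm$ linearly independent eigenvectors $v \otimes w$; but it does not by itself account for algebraic multiplicities or handle defective matrices.

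To obtain the full list of $nm$ eigenvalues with correct multiplicity, I would route through the (complex) Schur decomposition: write $A = U T_A U^*$ and $B = V T_B V^*$ with $U, V$ unitary and $T_A, T_B$ upper triangular, where the diagonal of $T_A$ is $\lambda_1, \ldots, \lambda_n$ and the diagonal of $T_B$ is $\mu_1, \ldots, \mu_m$. Applying the mixed-product property twice yields
\begin{equation*}
    A \otimes B = (U T_A U^*) \otimes (V T_B V^*) = (U \otimes V)\,(T_A \otimes T_B)\,(U \otimes V)^*,
\end{equation*}
and since $U \otimes V$ is unitary, $A \otimes B$ is unitarily similar to $T_A \otimes T_B$. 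The crux is then that the Kronecker product of two upper-triangular matrices is upper triangular, with diagonal entries exactly the pairwise products of the diagonals of the factors, i.e.\ the numbers $\lambda_i \mu_j$. Because the eigenvalues of a triangular matrix are its diagonal entries, and unitary similarity preserves the spectrum with multiplicity, the eigenvalues of $A \otimes B$ are precisely $\lambda_i \mu_j$, $i = 1, \ldots, n$, $j = 1, \ldots, m$.

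The only genuine work is verifying the two structural facts about $\otimes$ — the mixed-product identity and the upper-triangularity of $T_A \otimes T_B$ with the stated diagonal — both of which follow immediately from the block definition of the Kronecker product and a bookkeeping of indices; I would state them and appeal to the block structure rather than expand entries. The point to be careful about is not the naive eigenvector computation (which is trivial) but ensuring the count is multiplicity-correct, which is exactly what the passage through Schur triangularization delivers; an alternative would be to compute the characteristic polynomial $\det(sI - A \otimes B) = \prod_{i,j}(s - \lambda_i \mu_j)$ directly, but the Schur route avoids determinant manipulations and is the cleaner argument.
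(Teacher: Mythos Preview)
Your argument is correct and is in fact the standard proof: pass to Schur form, use the mixed-product identity to get a unitary similarity $A \otimes B \sim T_A \otimes T_B$, and read off the diagonal of the upper-triangular Kronecker product. The paper does not supply its own proof of this lemma; it simply cites Horn and Johnson, whose treatment is essentially the Schur-based argument you outline, so there is nothing to compare against.
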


\begin{theorem}
    \label{thm:terminal_state_convergence}
    If $\calG$ satisfies~\cref{ass:spanning_tree}, then the terminal state in the optimization problem~(\ref{eq:dmpc_opt}) converges to the desired state relative to the leader in at most $N$ steps, \ie, for each $i = 1, \ldots, N$,
    \begin{equation*}
        y_{i,t}^p(H) = y_{i,t}^\text{des}(H)
    \end{equation*}
    for $t \geq N$.
\end{theorem}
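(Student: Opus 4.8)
The plan is to track the \emph{terminal error} $z_{i,t} \defeq y_{i,t}^p(H) - y_{i,t}^\text{des}(H)$ of each follower $i$ and show it is zero for all $t \ge N$. Because the last two lines of~(\ref{eq:dmpc_opt}) pin $y_{i,t}^p(H)$ and $a_{i,t}^p(H)$ to quantities assembled entirely from problem data, the terminal output carries no optimization freedom, so $y_{i,t}^p(H) = y_{i,t}^\star(H)$ and it suffices to analyze the time recursion that the terminal equality constraint induces on $z_{i,t}$. Since the leader moves at constant velocity and broadcasts its own trajectory, we also set $z_{0,t} \equiv 0$; this is the ``root'' that anchors the argument.

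First I would express the assumed terminal output through the previous optimal one. By~(\ref{eq:assumed_states_next})--(\ref{eq:assumed_outputs_next}), $y_{j,t}^a(H) = C A_j x_{j,t-1}^\star(H)$, and the terminal constraint $a_{j,t-1}^p(H) = 0$ makes the acceleration component of $x_{j,t-1}^\star(H)$ vanish, so $CA_j x_{j,t-1}^\star(H)$ is just a one-step free motion of the position--velocity pair. Combining this with $p_0(t) = p_0(t-1) + \dt\, v_0$, $v_0(t) \equiv v_0$, the definition~(\ref{eq:desired_output}) of $y_{i,t}^\text{des}$, and the form~(\ref{eq:ij_distance}) of $\tilde d_{i,j}$, a componentwise calculation gives $y_{j,t}^a(H) - y_{j,t}^\text{des}(H) = \bar A\, z_{j,t-1}$ with $\bar A = \bmat{1 & \dt \\ 0 & 1}$, while the velocity component of $y_{j,t}^a(H)$ equals $v_0$ plus that of $z_{j,t-1}$. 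Substituting into vehicle $i$'s terminal constraint and using the identity $y_{j,t}^\text{des}(H) - \tilde d_{i,j}(v_0) = y_{i,t}^\text{des}(H)$ (so that the ``desired'' part of every summand collapses to $y_{i,t}^\text{des}(H)$ and cancels against $-y_{i,t}^\text{des}(H)$), the constraint becomes the purely homogeneous recursion
\begin{equation*}
    z_{i,t} = \frac{1}{\abs{\calI_{i, \text{pre}}}} \littlesumx_{j \in \calI_{i, \text{pre}}} \Phi_{i,j}\, z_{j, t-1}, \qquad \Phi_{i,j} = \bmat{1 & \dt - (i-j)\delta_h \\ 0 & 1},
\end{equation*}
in which the $j = 0$ term vanishes because $z_{0,t} \equiv 0$. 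Here~\cref{ass:spanning_tree} is used: it guarantees $\calI_{i,\text{pre}} \neq \emptyset$, so the average is well defined, while $\calI_{i,\text{pre}} \subseteq \{0, \ldots, i-1\}$ by definition.

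To bring in~\cref{lemma:kronecker_product_eigenvals}, I would next strip the $i,j$-dependence from $\Phi_{i,j}$ by the change of variables $\tilde z_{i,t} = S_i z_{i,t}$ with $S_i = \bmat{1 & i\delta_h \\ 0 & 1}$. Writing $\bar A = I + \dt\, E$ and $\Phi_{i,j} = I + (\dt - (i-j)\delta_h)\, E$ with $E = \bmat{0 & 1 \\ 0 & 0}$ (so $E^2 = 0$), one checks $S_i \Phi_{i,j} S_j^{-1} = I + \dt\, E = \bar A$ for all $i,j$, so the recursion becomes $\tilde Z(t) = (W \otimes \bar A)\, \tilde Z(t-1)$, where $\tilde Z(t) = (\tilde z_{1,t}, \ldots, \tilde z_{N,t})$ and $W \in \R^{N\times N}$ has $(i,j)$ entry $1/\abs{\calI_{i,\text{pre}}}$ for $j \in \calI_{i,\text{pre}} \cap \{1,\ldots,N\}$ and $0$ otherwise. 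Because $\calI_{i,\text{pre}}$ contains only indices smaller than $i$, $W$ is strictly lower triangular and hence has all eigenvalues $0$; $\bar A$ is upper triangular with both eigenvalues equal to $1$; so by~\cref{lemma:kronecker_product_eigenvals} every eigenvalue of $W \otimes \bar A$ is $0$. Being $N \times N$ and strictly lower triangular, $W$ moreover satisfies $W^N = 0$, so $(W \otimes \bar A)^N = W^N \otimes \bar A^N = 0$. Iterating the recursion therefore gives $\tilde Z(t) = 0$, \ie, $Z(t) = 0$, for all $t \ge N$, which is exactly the claim $y_{i,t}^p(H) = y_{i,t}^\text{des}(H)$ for $t \ge N$.

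I expect the main obstacle to be the componentwise bookkeeping needed to derive that recursion: correctly propagating position and velocity through the one-step shift~(\ref{eq:assumed_states_next})--(\ref{eq:assumed_outputs_next}) together with the affine offset $\tilde d_{i,j}$, and verifying that the ``desired'' contributions cancel \emph{exactly} so that no constant forcing term survives and only the nilpotent homogeneous part remains. Once that recursion is in hand the rest is the routine linear algebra above, and it becomes transparent where the hypotheses matter: constant leader velocity and the ``every vehicle has a preceding neighbor'' half of~\cref{ass:spanning_tree} are precisely what make the cancellation go through and force $W$ to be strictly lower triangular.
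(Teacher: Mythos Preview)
Your argument is correct and follows the same overall strategy as the paper: derive a homogeneous linear recursion for the terminal error $z_{i,t}=y_{i,t}^p(H)-y_{i,t}^\text{des}(H)$, stack it over $i$, and exploit the strict lower--triangularity induced by $\calI_{i,\text{pre}}\subseteq\{0,\ldots,i-1\}$ to conclude nilpotency in at most $N$ steps.

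Where you differ from the paper is in the handling of the matrix $\Phi_{i,j}=\bmat{1 & \dt-(i-j)\delta_h\\0&1}$. The paper writes the stacked error dynamics directly as a Kronecker product $\bigl((D_\text{pre}+P_\text{pre})^{-1}M_\text{pre}\bigr)\otimes\Delta$ and then invokes~\cref{lemma:kronecker_product_eigenvals}; but $\Delta$ depends on $(i,j)$ whenever $\delta_h\neq 0$, so that Kronecker expression is not literally well formed in the CTH case. Your similarity transform $\tilde z_{i,t}=S_i z_{i,t}$ with $S_i=\bmat{1&i\delta_h\\0&1}$ removes this dependence and yields a genuine Kronecker product $W\otimes\bar A$, after which the spectrum argument is legitimate. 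In this sense your execution is tighter than the paper's. Your direct observation $(W\otimes\bar A)^N=W^N\otimes\bar A^N=0$ also gives the $N$-step bound immediately, whereas the eigenvalue route in the paper only shows the $2N\times 2N$ matrix is nilpotent without by itself pinning down the index as $N$.
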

\begin{proof}
    The terminal constraints in~(\ref{eq:dmpc_opt}) are
    \begin{align*}
        y_{i,t}^p(H) & = \tfrac{1}{\abs{\calI_{i, \text{pre}}}} \littlesumx_{j \in \calI_{i, \text{pre}}} \left( y_{j,t}^a(H) - \tilde d_{i, j}(v_{j,t}^a(H)) \right) \\
        a_{i,t}^p(H) & = 0,
    \end{align*}
    and thus $x_{i,t}^\star(H) = x_{i,t}^p(H)$ for all $i = 1, \ldots, H$. Based on these terminal constraints and~\cref{eq:assumed_input_next,eq:assumed_states_next,eq:assumed_outputs_next}, we have $v_{i,t+1}^a(H) = v_{i,t}^p(H)$ and $p_{i,t+1}^a(H) = p_{i,t}^p(H) + \dt v_{i,t}^p(H)$. Combining these facts with~\cref{eq:ij_distance}, we have
    \begin{align}
        y_{i,t+1}^p(H) & = \tfrac{1}{\abs{\calI_{i, \text{pre}}}} \littlesumx_{j \in \calI_{i, \text{pre}}} \left( y_{j,t+1}^a(H) - \tilde d_{i, j}(v_{j,t+1}^a(H)) \right) \notag \\
        & = \tfrac{1}{\abs{\calI_{i, \text{pre}}}} \littlesumx_{j \in \calI_{i, \text{pre}}} \left( \Delta y_{j,t}^p(H) - (i - j) \tilde \delta_\text{safe} \right) \label{eq:next_timestep_horizon}
    \end{align}
    where
    \begin{equation*}
        \Delta = \bmat{1 & \dt - (i - j) \delta_h \\ 0 & 1}, \ \tilde \delta_\text{safe} = \bmat{\delta_\text{safe} \\ 0}.
    \end{equation*}
    Define the error state for each vehicle $i = 1, \ldots, N$ as
    \begin{equation}
        \label{eq:error_state}
        \tilde y_{i,t}^p(H) = y_{i,t}^p(H) - y_{i,t}^\text{des}(H).
    \end{equation}
    Based on~\cref{eq:desired_output}, after rearranging~\cref{eq:error_state} and substituting it into~\cref{eq:next_timestep_horizon}, we obtain
    \begin{multline*}
        \tilde y_{i,t+1}^p(H) + y_{i,t+1}^\text{des}(H) = \\ \tfrac{1}{\abs{\calI_{i, \text{pre}}}} \littlesumx_{j \in \calI_{i, \text{pre}}} \left(\tilde \Delta y_{j,t}^p(H) + y_{i,t+1}^\text{des}(H)\right),
    \end{multline*}
    where we use the fact that $p_{0,t+1}(H) = p_{0,t}(H) + \dt v_{0,t}(H)$ and $v_{0,t+1}(H) = v_{0,t}(H)$. Thus, we obtain
    \begin{equation}
        \label{eq:terminal_output_single_vehicle}
        \tilde y_{i,t+1}^p(H) = \tfrac{1}{\abs{\calI_{i, \text{pre}}}} \littlesumx_{j \in \calI_{i, \text{pre}}} \Delta \tilde y_{j,t+1}^p(H).
    \end{equation}
    Let $\tilde y_{t}^p(H) \defeq (\tilde y_{1,t}^p(H), \ldots, \tilde y_{N,t}^p(H)) \in \R^{Np}$. Then by concatenating~\cref{eq:terminal_output_single_vehicle} and rewriting using matrices, the platoon's terminal output predicted error is given by
    \begin{equation}
        \label{eq:terminal_output_platoon}
        \tilde y_{t+1}^p(H) = \left( \left( \left(D_\text{pre} + P_\text{pre}\right)^{-1} M_\text{pre}\right) \otimes \Delta\right) \tilde y_{t}^p(H).
    \end{equation}
    The diagonal matrix $(D_\text{pre} + P_\text{pre})$ is invertible based on~\cref{ass:spanning_tree}. The adjacency matrix $M_\text{pre}$ is lower triangular with all zeros on the diagonal, and thus has all zero eigenvalues. Hence, the matrix product $(D_\text{pre} + P_\text{pre})^{-1} M_\text{pre}$ is lower triangular with zeros on the diagonal, and therefore has all zero eigenvalues. Based on~\cref{lemma:kronecker_product_eigenvals}, the eigenvalues of $( \left(D_\text{pre} + P_\text{pre}\right)^{-1} M_\text{pre}) \otimes \Delta$ are all zero, and the matrix is nilpotent (a matrix $A \in \R^{n \times n}$ is nilpotent if there exists an integer $k > 0$ such that $A^k = 0$, and the maximum possible value of $k$ is $n$). Thus, we have
    \begin{equation*}
        \tilde y_t^p(H) = 0 \implies y_{i,t}^p(H) = y_{i,t}^\text{des}(H),\ i = 1, \ldots, N 
    \end{equation*}
    for $t \geq N$.
\end{proof}

\begin{remark}
    The use of $\calI_{i, \text{pre}}$ instead of $\calI_i$ in the terminal constraint is crucial in cases where vehicles receive information from vehicles following them, \eg, the BD case. In that scenario, if we had used $\calI_i$ instead of $\calI_{i, \text{pre}}$, the left side of the Kronecker product would be $(D + P)^{-1} M$. The matrix $M$ would not be lower triangular and would have nonzero eigenvalues, so the matrix $((D + P)^{-1} M) \otimes \Delta$ would no longer be nilpotent. In fact, the eigenvalues of $(D + P)^{-1} M$ would have magnitude less than one, so the dynamical system given by~\cref{eq:terminal_output_platoon} would be stable, and the output errors would converge asymptotically. However, this property is not enough to ensure recursive feasibility, which will be required for the stability proofs.
\end{remark}


\subsection{Analysis of the Local Cost Function}


Let $V_i(t)$ be the optimal value of the optimization problem (\ref{eq:dmpc_opt}) for vehicle $i$ at timestep $t$, \ie,
\begin{equation}
    \label{eq:local_cost_function_opt}
    V_i(t) = J_i(y_{i,t}^\star, u_{i,t}^\star, y_{i,t}^a, y_{-i,t}^a).
\end{equation}

\begin{theorem}
    \label{thm:local_cost}
    If $\calG$ satisfies~\cref{ass:spanning_tree} and $t \geq N$, then $\Delta V_i \defeq V_i(t+1) - V_i(t)$ is bounded above using
    \begin{equation*}
        \Delta V_i \leq -l_i\left(y_{i,t}^\star(0), u_{i,t}^\star(0), y_{i,t}^a(0), y_{-i,t}^a(0)\right) + \littlesumx_{k=1}^{H-1} \epsilon_{i,k}
    \end{equation*}
    where
    \begin{equation*}
        \eps_{i,k} \leq \littlesumx_{j \in \calR_i} q_{i,j} \normm{y_{j,t}^\star(k) - y_{j,t}^a(k)} - q_{i,i} \normm{y_{i,t}^\star(k) - y_{i,t}^a(k)}.
    \end{equation*}
\end{theorem}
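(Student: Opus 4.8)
The plan is to construct a feasible (generally suboptimal) candidate solution to problem~(\ref{eq:dmpc_opt}) at timestep $t+1$ using the optimal solution from timestep $t$, and to bound the optimal value $V_i(t+1)$ by the cost of this candidate. The natural candidate is the "shifted" trajectory: $\tilde u_{i,t+1}^p(k) = u_{i,t}^\star(k+1)$ for $k = 0, \ldots, H-2$ and $\tilde u_{i,t+1}^p(H-1) = 0$, with the corresponding states/outputs propagated through the dynamics, matching exactly the assumed-trajectory update rules~(\ref{eq:assumed_input_next})--(\ref{eq:assumed_outputs_next}). First I would verify this candidate is feasible for~(\ref{eq:dmpc_opt}) at $t+1$: the initial condition holds since $x_{i,t+1}^p(0) = x_i(t+1) = x_{i,t}^\star(1)$ under the implemented input; the input constraint $0 \in \calU_i$ holds (as $\calU_i$ is convex and contains the operating point); and the terminal constraints $a_{i,t+1}^p(H) = 0$ and the averaging constraint hold precisely because of \cref{thm:terminal_state_convergence}, which guarantees for $t \geq N$ that all terminal predicted outputs already sit at the desired-relative-to-leader values, so the shifted terminal state is consistent with the neighbors' shifted assumed terminal states. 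This is where the hypothesis $t \geq N$ enters.

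Next I would write $V_i(t+1) \leq J_i(\tilde y_{i,t+1}^p, \tilde u_{i,t+1}^p, y_{i,t+1}^a, y_{-i,t+1}^a)$ and expand the right-hand side as a sum of stage costs $l_i$ over $k = 0, \ldots, H-1$. Because the candidate predicted trajectory coincides with the assumed trajectory $y_{i,t+1}^a$ by construction, the first term $q_{i,i}\normm{\tilde y_{i,t+1}^p(k) - y_{i,t+1}^a(k)}$ vanishes for every $k$. Re-indexing the remaining stage costs by $k \mapsto k-1$ to align with the timestep-$t$ horizon, and comparing term by term against the stage costs that make up $V_i(t)$, the difference $\Delta V_i = V_i(t+1) - V_i(t)$ telescopes: the $k=0$ stage cost of $V_i(t)$ is left over with a minus sign, giving the $-l_i(y_{i,t}^\star(0), u_{i,t}^\star(0), y_{i,t}^a(0), y_{-i,t}^a(0))$ term, while for $k = 1, \ldots, H-1$ we are left with a residual $\epsilon_{i,k}$ coming from the mismatch between the neighbor data $y_{-i,t+1}^a$ used at timestep $t+1$ and $y_{-i,t}^a$ used at timestep $t$, plus the $q_{i,i}$ term that was present in $V_i(t)$ but has disappeared from the candidate. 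The $r_i\normm{u}^2$ input terms cancel in the telescoping (the extra $k=H-1$ term is zero since the appended input is zero), and the terminal-horizon contributions contribute nothing new because both trajectories already track the leader exactly there.

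It remains to bound each residual $\epsilon_{i,k}$. The only neighbor data that changed between the two problems is the assumed outputs of vehicles in $\calR_i$ (the leader's data is unchanged since the leader's velocity is constant, which is also assumed in this section). Using the triangle inequality on the norm $\normm{y_{i,t+1}^p(k-1)\text{(shifted)} - y_{j,t+1}^a(k-1) + \tilde d_{i,j}(\cdot)}$, I would compare it against the corresponding term in $V_i(t)$ with argument $y_{j,t}^\star(k)$; since $y_{j,t}^a(k-1)\text{ at }t+1$ equals $y_{j,t}^\star(k)$ by the assumed-update rule, the cross terms cancel except for a leftover $q_{i,j}\normm{y_{j,t}^\star(k) - y_{j,t}^a(k)}$ measuring how far vehicle $j$'s realized optimal output at timestep $t$ deviated from what it had promised, and the $-q_{i,i}\normm{y_{i,t}^\star(k) - y_{i,t}^a(k)}$ term is simply the $q_{i,i}$ stage-cost contribution to $V_i(t)$ that the candidate no longer incurs. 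Summing over $j \in \calR_i$ yields the claimed bound on $\epsilon_{i,k}$.

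The main obstacle I anticipate is the careful bookkeeping of indices in the telescoping step — matching up stage cost $k$ at timestep $t+1$ with stage cost $k+1$ at timestep $t$, correctly tracking which $v_i^p$ appears inside $\tilde d_{i,j}$ (the candidate's own predicted velocity, which is the shifted optimal velocity), and confirming that the terms at the horizon boundary $k = H-1$ genuinely produce no uncancelled positive contribution. A secondary subtlety is justifying that $\normm{y_{j,t+1}^a(k) + \tilde d_{i,j}(\cdot)}$-type terms split cleanly under the triangle inequality for an arbitrary norm; this works because the norm inequality $\normm{a} - \normm{b} \leq \normm{a - b}$ is all that is needed, but one must be attentive to signs so that the bound on $\epsilon_{i,k}$ comes out with the stated coefficients $q_{i,j}$ and the negative $q_{i,i}$ term.
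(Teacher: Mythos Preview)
Your proposal is correct and follows essentially the same route as the paper: you use the shifted optimal trajectory (which coincides with the assumed trajectory $y_{i,t+1}^a$, $u_{i,t+1}^a$) as a feasible candidate at $t+1$, invoke \cref{thm:terminal_state_convergence} to justify terminal feasibility, telescope the stage costs, and then apply the reverse triangle inequality to bound the neighbor-mismatch residuals. Your write-up is in fact more explicit than the paper on several bookkeeping points (feasibility of $0\in\calU_i$, why the leader's contribution drops out so $\calI_i$ reduces to $\calR_i$, and the horizon-boundary term), but the underlying argument is the same.
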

\begin{proof}
    Assume~\cref{ass:spanning_tree} holds and $t \geq N$. From~\cref{thm:terminal_state_convergence}, we have $y_{i,t}^\star(H) = y_{i,t}^\text{des}(H)$ for all $i$. Based on~\cref{eq:assumed_input_next,eq:assumed_states_next,eq:assumed_outputs_next}, a feasible control sequence for the next timestep $t+1$ is given by $u_{i,t+1}^a(H)$ with corresponding outputs $y_{i,t+1}^a(H)$. Then we can bound the optimal value $V_i$ at the next timestep as follows:
    \begin{align*}
        & V_i(t+1) \\
        \def\arraystretch{1.5}
        & \begin{array}{l}
            \leq J_i\left(y_{i,t+1}^a, u_{i,t+1}^a, y_{i,t+1}^a, y_{-i,t+1}^a\right) \\
            = \littlesumx\limits_{k=0}^{H-1} l_i\left(y_{i,t+1}^a(k), u_{i,t+1}^a(k), y_{i,t+1}^a(k), y_{-i,t+1}^a(k)\right) \\
            = \littlesumx\limits_{k=0}^{H-2} l_i\left(y_{i,t}^\star(k+1), u_{i,t}^\star(k+1), y_{i,t}^\star(k+1), y_{-i,t}^\star(k+1)\right) \\
            = \littlesumx\limits_{k=1}^{H-1} l_i\left(y_{i,t}^\star(k), u_{i,t+1}^\star(k), y_{i,t+1}^\star(k), y_{-i,t+1}^\star(k)\right)
        \end{array}
    \end{align*}
    where the second equality follows from the construction of the assumed inputs, \cref{eq:assumed_input_next}, and outputs, \cref{eq:assumed_outputs_next}, and the third (last) equality comes from changing the index of summation. Using this result, we have
    \begin{align*}
        \Delta V_i \leq & \littlesumx_{k=1}^{H-1} l_i\left(y_{i,t}^\star(k), u_{i,t+1}^\star(k), y_{i,t+1}^\star(k), y_{-i,t+1}^\star(k)\right) \\
        & - \littlesumx_{k=0}^{H-1} l_i\left(y_{i,t}^\star(k), u_{i,t+1}^\star(k), y_{i,t+1}^a(k), y_{-i,t+1}^a(k)\right) \\
        = & -l_i\left(y_{i,t}^\star(0), u_{i,t}^\star(0), y_{i,t}^a(0), y_{-i,t}^a(0)\right) + \littlesumx_{k=1}^{H-1} \epsilon_{i,k}
    \end{align*}
    where 
    \begin{multline}
        \label{eq:eps_i_k_1}
        \eps_{i,k} = l_i\left(y_{i,t}^\star(k), u_{i,t}^\star(k), y_{i,t}^\star(k), y_{-i,t}^\star(k)\right) \\
        - l_i\left(y_{i,t}^\star(k), u_{i,t}^\star(k), y_{i,t}^a(k), y_{-i,t}^a(k)\right).
    \end{multline}
    After expanding the two $l_i$ terms, the terms related to $u_{i,t}^\star(k)$ cancel out. Furthermore, since the virtual leader's trajectory is assumed constant speed, we have $y_{0,t}^\star(k) = y_{0,t}^a(k)$, so those terms cancel out if vehicle $i$ communicates with the leader. Thus, in~\cref{eq:dmpc_cost}, the summation over vehicles in the information set, $\calI_i$, becomes a summation over the vehicles in the receive set, $\calR_i$. Then, dropping the timestep index $k$ for brevity,~\cref{eq:eps_i_k_1} becomes
    \begin{align*}
        \eps_{i,k} = & \ q_{i,i} \normm{y_{i,t}^\star - y_{i,t}^\star} - q_{i,i} \normm{y_{i,t}^\star - y_{i,t}^a} \\
        & + \littlesumx_{j \in \calR_i} \Bigl(q_{i,j} \normm{y_{i,t}^\star - y_{j,t}^\star + \tilde d_{i,j}(v_{i,t}^\star)} \\
        & \qquad \quad \ \, - q_{i,j} \normm{y_{i,t}^\star - y_{j,t}^a + \tilde d_{i,j}(v_{i,t}^\star)} \Bigr).
    \end{align*}
    Removing the first term which is zero and applying the reverse triangle inequality to the two norms inside the summation over $\calR_i$, we obtain
    \begin{equation*}
        \eps_{i,k} \leq \littlesumx_{j \in \calR_i} q_{i,j} \normm{y_{j,t}^\star - y_{j,t}^a} - q_{i,i} \normm{y_{i,t}^\star - y_{i,t}^a}
    \end{equation*}
    as desired.
\end{proof}
\begin{remark}
    Due to the use of $\calI_{i, \text{pre}}$ in the terminal equality constraint, this proof is similar to the proof provided in~\cite{zheng2017distributed}, but with a change of notation and use of the affine spacing policy. As discussed in~\cite{zheng2017distributed}, this result is not strong enough to determine conditions on the weights $q_{i,j}$ to guarantee stability since the output vectors within the two norms are related to vehicles $i$ and $j$. Nevertheless, by using the sum of the local cost functions, we can alleviate this issue.
\end{remark}


\subsection{Analysis of the Sum of Local Cost Functions}


Define $V(t)$ as the sum of the optimal values of the optimization problem (\ref{eq:dmpc_opt}) given by~\cref{eq:local_cost_function_opt}, over $i = 1, \ldots, N$ at timestep $t$, \ie,
\begin{equation*}
    V(t) = \littlesumx_{i=1}^N V_i(t) = \littlesumx_{i=1}^N J_i(y_{i,t}^\star, u_{i,t}^\star, y_{i,t}^a, y_{-i,t}^a).
\end{equation*}
Before providing the next theorem, it is important to note the following fact based on the definitions of the receive set $\calR_i$ and the share set $\calS_i$. For any function $f(i,j)$ that depends on vehicles $i$ and $j$, we have the following equality:
\begin{equation}
    \label{eq:share_receive_equality}
    \littlesumx_{i=1}^N \littlesumx_{j \in \calR_i} f(i,j) = \littlesumx_{i=1}^N \littlesumx_{j \in \calS_i} f(j,i).
\end{equation}

\begin{theorem}
    \label{thm:sum_local_cost}
    If $\calG$ satisfies~\cref{ass:spanning_tree} and $t \geq N$, then $\Delta V \defeq V(t+1) - V(t)$ is bounded above using
    \begin{equation*}
        \Delta V \leq -\littlesumx_{i=1}^N l_i\left(y_{i,t}^\star(0), u_{i,t}^\star(0), y_{i,t}^a(0), y_{-i,t}^a(0)\right) + \littlesumx\limits_{k=1}^{H-1} \epsilon_k
    \end{equation*}
    where
    \begin{equation*}
        \epsilon_k = \littlesumx_{i=1}^N \left[ \left(\littlesumx_{j \in \calS_i} q_{j,i} - q_{i,i} \right) \normm{y_{i,t}^\star(k) - y_{i,t}^a(k)} \right].
    \end{equation*}
\end{theorem}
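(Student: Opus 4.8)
The plan is to sum the per-vehicle estimate from~\cref{thm:local_cost} over $i=1,\ldots,N$ and then rearrange the resulting cross terms with the share/receive identity~\cref{eq:share_receive_equality}.

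Since $t \ge N$ and $\calG$ satisfies~\cref{ass:spanning_tree}, \cref{thm:local_cost} applies to every vehicle. Adding the bounds on $\Delta V_i$ and using $\Delta V = \sum_{i=1}^N \Delta V_i$ immediately produces the $-\sum_{i=1}^N l_i(\cdots)$ term together with $\sum_{i=1}^N \sum_{k=1}^{H-1} \epsilon_{i,k}$; swapping the (finite) summations over $i$ and $k$ and setting $\epsilon_k \defeq \sum_{i=1}^N \epsilon_{i,k}$ gives the stated outer form, so it remains only to massage $\epsilon_k$ into the claimed expression.

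For that, I would insert the upper bound on $\epsilon_{i,k}$ from~\cref{thm:local_cost}, splitting $\epsilon_k$ into a double sum $\sum_{i=1}^N \sum_{j\in\calR_i} q_{i,j}\normm{y_{j,t}^\star(k)-y_{j,t}^a(k)}$ over received neighbors and a diagonal sum $-\sum_{i=1}^N q_{i,i}\normm{y_{i,t}^\star(k)-y_{i,t}^a(k)}$. Applying~\cref{eq:share_receive_equality} with $f(i,j)=q_{i,j}\normm{y_{j,t}^\star(k)-y_{j,t}^a(k)}$ rewrites the first double sum as $\sum_{i=1}^N \sum_{j\in\calS_i} q_{j,i}\normm{y_{i,t}^\star(k)-y_{i,t}^a(k)}$; now every term in both sums carries the common factor $\normm{y_{i,t}^\star(k)-y_{i,t}^a(k)}$, and collecting coefficients yields $\epsilon_k \le \sum_{i=1}^N\big(\sum_{j\in\calS_i} q_{j,i}-q_{i,i}\big)\normm{y_{i,t}^\star(k)-y_{i,t}^a(k)}$, which is the stated bound (and one may take equality in the definition of $\epsilon_k$ since the only inequalities used are those already in~\cref{thm:local_cost}).

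The only delicate point is the bookkeeping in that re-indexing: in the original double sum the weight $q_{i,j}$ depends on the ordered pair while the norm depends only on the source index $j$, so after relabeling the weight becomes $q_{j,i}$ and the norm is tied to the outer index $i$. One should verify $j\in\calR_i \iff i\in\calS_j$ --- both simply say $(j,i)\in\calE$, by~\cref{eq:important_sets} --- so that~\cref{eq:share_receive_equality} is legitimately applicable. No condition on the $q_{i,j}$ beyond positivity is used here; the sign requirement $\sum_{j\in\calS_i} q_{j,i}\le q_{i,i}$ that would force $\epsilon_k\le 0$, and hence stability, is to be imposed in the subsequent step.
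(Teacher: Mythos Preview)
Your proposal is correct and follows essentially the same route as the paper: sum the per-vehicle bound from \cref{thm:local_cost}, swap the finite $i$ and $k$ sums, then apply the receive/share re-indexing identity~\cref{eq:share_receive_equality} to convert $\sum_i\sum_{j\in\calR_i}q_{i,j}\normm{y_{j,t}^\star-y_{j,t}^a}$ into $\sum_i\sum_{j\in\calS_i}q_{j,i}\normm{y_{i,t}^\star-y_{i,t}^a}$ and factor. Your explicit check that $j\in\calR_i\iff i\in\calS_j$ and your remark that the only inequality enters via \cref{thm:local_cost} are both sound (the paper simply redefines $\epsilon_k$ as the final expression, absorbing that inequality into the overall bound on $\Delta V$).
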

\vspace{10pt}
\begin{proof}
    Suppose $\calG$ satisfies~\cref{ass:spanning_tree} and $t \geq N$. Then from~\cref{thm:local_cost}, we have
    \begin{equation*}
        \Delta V \leq -\littlesumx_{i=1}^N l_i\left(y_{i,t}^\star(0), u_{i,t}^\star(0), y_{i,t}^a(0), y_{-i,t}^a(0)\right) + \littlesumx_{k=1}^{H-1} \epsilon_k
    \end{equation*}
    where (again, dropping the timestep index $k$ for brevity)
    \begin{align*}
        \epsilon_k & = \littlesumx_{i=1}^N \eps_{i,k} \\
        & = \littlesumx_{i=1}^N \left(\littlesumx_{j \in \calR_i} q_{i,j} \normm{y_{j,t}^\star - y_{j,t}^a} - q_{i,i} \normm{y_{i,t}^\star - y_{i,t}^a}\right) \\
        & = \littlesumx_{i=1}^N \left(\littlesumx_{j \in \calS_i} q_{j,i} \normm{y_{i,t}^\star - y_{i,t}^a} - q_{i,i} \normm{y_{i,t}^\star - y_{i,t}^a}\right) \\
        & = \littlesumx_{i=1}^N \left[ \left(\littlesumx_{j \in \calS_i} q_{j,i} - q_{i,i} \right) \normm{y_{i,t}^\star(k) - y_{i,t}^a(k)} \right].
    \end{align*}
    where the third equality follows from~\cref{eq:share_receive_equality}.
\end{proof}

\subsection{A Sufficient Condition for Platoon Asymptotic Stability}

To ensure asymptotic stability, the Lyapunov candidate function $V$ must always be positive and decreasing between timesteps whenever it does not equate to zero. The function $V$ is clearly always positive since it is the nonnegative weighted sum of norms. It is also only zero when the optimal trajectories are equal to the assumed trajectories, which occurs when the entire platoon has stabilized to the desired trajectories relative to the leader. Thus, we are left with proving that $\Delta V$ is always decreasing whenever the cost at timestep $t$ is not equal to zero. The following theorem addresses this.

\begin{theorem}
    \label{thm:stability_condition}
    If $\calG$ satisfies~\cref{ass:spanning_tree} and $t \geq N$, then the platoon is asymptotically stable if for each $i = 1, \ldots, N$,
    \begin{equation}
        \label{eq:sufficient_condition}
        q_{i,i} \geq \littlesumx_{j \in \calS_i} q_{j,i}.
    \end{equation}
\end{theorem}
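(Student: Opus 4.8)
The plan is to use $V$ as a Lyapunov function: \cref{thm:sum_local_cost} already supplies a descent bound, the hypothesis will make the error term in that bound nonpositive so that $V$ is nonincreasing, and the remaining work is to convert the resulting summability into the trajectory convergence demanded by \cref{eq:desired_output}. First I would substitute $q_{i,i} \geq \littlesumx_{j \in \calS_i} q_{j,i}$ into the formula for $\epsilon_k$ from \cref{thm:sum_local_cost}: each coefficient $\littlesumx_{j \in \calS_i} q_{j,i} - q_{i,i}$ is nonpositive and each norm $\normm{y_{i,t}^\star(k) - y_{i,t}^a(k)}$ is nonnegative, so $\epsilon_k \leq 0$ for every $k$ and hence $\littlesumx_{k=1}^{H-1} \epsilon_k \leq 0$. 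Therefore, for all $t \geq N$,
\[
    \Delta V \leq -\littlesumx_{i=1}^N l_i\!\left(y_{i,t}^\star(0), u_{i,t}^\star(0), y_{i,t}^a(0), y_{-i,t}^a(0)\right) \leq 0,
\]
so $V$ is nonincreasing for $t \geq N$. Since $V(t) \geq 0$, telescoping this inequality shows the nonnegative quantities $\littlesumx_{i=1}^N l_i(y_{i,t}^\star(0), u_{i,t}^\star(0), y_{i,t}^a(0), y_{-i,t}^a(0))$ are summable over $t$, hence tend to $0$ as $t \to \infty$.

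Next I would extract the physical meaning of a vanishing stage cost. Since each $l_i$ in \cref{eq:dmpc_cost} is a sum of norms with strictly positive weights and all the summands above are nonnegative, every term vanishes individually: $\normm{u_{i,t}^\star(0)} \to 0$ for each $i$, and $\normm{y_{i,t}^\star(0) - y_{j,t}^a(0) + \tilde d_{i,j}(v_{i,t}^\star(0))} \to 0$ for each $i$ and each $j \in \calI_i$. I would then identify these with the closed-loop trajectories through the algorithm's bookkeeping: the implemented input is $u_i(t) = u_{i,t}^\star(0)$; the initialization constraint $x_{i,t}^p(0) = x_i(t)$ gives $y_{i,t}^\star(0) = y_i(t)$; and \cref{eq:assumed_states_next,eq:assumed_outputs_next} together with $x_i(t+1) = A_i x_i(t) + B_i u_{i,t}^\star(0) = x_{i,t}^\star(1)$ give $y_{j,t}^a(0) = y_j(t)$. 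Reading off the two components of $\tilde d_{i,j}$ from \cref{eq:ij_distance} then yields $v_i(t) - v_j(t) \to 0$ and $p_i(t) - p_j(t) + (i-j)(\delta_h v_i(t) + \delta_\text{safe}) \to 0$ for each $i$ and each $j \in \calI_i$.

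Finally I would propagate these relative limits outward along the spanning tree guaranteed by \cref{ass:spanning_tree}. Since the leader's speed is held constant, $v_0(t) \equiv v_0$; every vehicle has a tree parent in its information set, so walking from the root and chaining $v_i(t) - v_j(t) \to 0$ along tree edges gives, by induction on tree depth, $v_i(t) \to v_0$ for all $i$. Substituting $v_i(t) \to v_0$ into the position limit and setting $e_i(t) \defeq p_i(t) - p_0(t) + i(\delta_h v_0 + \delta_\text{safe})$ (so $e_0(t) \equiv 0$), the position limits become $e_i(t) - e_j(t) \to 0$ along tree edges, and the same induction gives $e_i(t) \to 0$, \ie, $p_i(t) \to p_0(t) - i(\delta_h v_0(t) + \delta_\text{safe})$. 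Combined with $v_i(t) \to v_0(t)$, this is exactly $y_i(t) \to y_i^\text{des}(t)$ from \cref{eq:desired_output}, which is the claimed asymptotic stability.

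The main obstacle is not the descent bound but these last two steps. One must match the assumed, optimal, and actual trajectories carefully so that the vanishing of the stage costs genuinely constrains the physical quantities $p_i, v_i$ --- note, for instance, that the $q_{i,i}$ term in $l_i(y_{i,t}^\star(0), \dots)$ is identically zero because $y_{i,t}^a(0) = y_i(t)$, so all the content lives in the neighbor terms --- and one must justify that convergence of the \emph{differences} of velocities and positions forces convergence of the absolute quantities. The latter succeeds only because the chain of differences is anchored at the leader, whose error is identically zero, and because \cref{ass:spanning_tree} ensures the tree reaches every vehicle; with that in place the argument closes.
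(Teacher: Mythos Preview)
Your argument is correct and shares the paper's core step: the hypothesis forces each coefficient $\littlesumx_{j \in \calS_i} q_{j,i} - q_{i,i}$ in \cref{thm:sum_local_cost} to be nonpositive, so $\epsilon_k \leq 0$ and $V$ is nonincreasing for $t \geq N$. The paper's own proof stops there and simply asserts that ``all Lyapunov stability conditions are satisfied''; your telescoping to obtain summability of the stage costs, the identification $y_{j,t}^a(0) = y_j(t)$ via \cref{eq:assumed_outputs_next}, and the spanning-tree propagation that converts the vanishing neighbor terms into $y_i(t) \to y_i^{\text{des}}(t)$ are correct details that the paper leaves implicit.
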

\begin{proof}
    Assume $\calG$ satisfies~\cref{ass:spanning_tree}, $t \geq N$, and~\cref{eq:sufficient_condition} holds. Then we have
    \begin{align*}
        \epsilon_k & = \littlesumx_{i=1}^N \left[ \left(\littlesumx_{j \in \calS_i} q_{j,i} - q_{i,i} \right) \normm{y_{i,t}^\star(k) - y_{i,t}^a(k)} \right] \\
        & \leq 0.
    \end{align*}
    Combining this with~\cref{thm:sum_local_cost}, we have
    \begin{equation*}
        V(t+1) - V(t) \begin{cases}
            = 0 & y_{i,t}^\star(k) = y_{i,t}^a(k),\ \forall i, k \\
            < 0 & \text{otherwise}
        \end{cases}
    \end{equation*}
    and all Lyapunov stability conditions are satisfied. As a result, platoon asymptotic stability is guaranteed.
\end{proof}

\begin{remark}
    This condition is different from that originally derived in~\cite{zheng2017distributed} (as well as in future works, like~\cite{qiang2023dmpc}) since we use scalar weights $q_{i,j}$ instead of positive definite matrices $Q_{i,j}$ within the norm. We also allow for the use of arbitrary norms in the cost function. In theorem 5 in~\cite{zheng2017distributed}, the authors use the condition (using the same notation as this paper)
    \begin{equation}
        \label{eq:previous_condition}
        Q_{i,i} \succeq \littlesumx_{j \in \calS_i} Q_{j,i}
    \end{equation}
    to claim that
    \begin{equation}
        \label{eq:previous_claim}
        \littlesumx_{j \in \calS_i} \normm{y_{i,t}^\star(k) - y_{i,t}^a(k)}_{Q_{j,i}} - \normm{y_{i,t}^\star(k) - y_{i,t}^a(k)}_{Q_{i,i}} \leq 0
    \end{equation}
    where $\normm{x}_Q \defeq \sqrt{x^\tp Q x}$. However, this does not always hold if $\abs{\calS_i} > 1$, which we prove by counterexample. Let $z \defeq y_{i,t}^\star(k) - y_{i,t}^a(k)$, and suppose $\calS_i = \{i_1, i_2\}$, $Q_{i_1,i} = I$, $Q_{i_2,i} = I$, and $Q_{i,i} = 2I$, so~\cref{eq:previous_condition} holds with equality. Under this setup,~\cref{eq:previous_claim} becomes
    \begin{equation*}
        \normm{z}_I + \normm{z}_I - \normm{z}_{2I} \leq 0
    \end{equation*}
    which can be rewritten as
    \begin{equation*}
        2 \normm{z} \leq \sqrt{2} \normm{z}
    \end{equation*}
    which is clearly not true. Similarly, other papers, like~\cite{wang2023dmpc} for example, use~\cref{eq:previous_condition} to claim that 
    \begin{equation}
        \label{eq:previous_claim_2}
        \littlesumx_{j \in \calS_i} \normm{Q_{j,i} z} - \normm{Q_{i,i} z} \leq 0,
    \end{equation}
    which is also not true in general. For example, using the same setup as before, if we assume the norms are Euclidean norms and take $z = (1, 1)$, $Q_{i_1, i} = \diag{(3, 2)}$, $Q_{i_2, i} = \diag{(2, 1)}$, and $Q_{i,i} = \diag{(5, 3)}$, we see that~\cref{eq:previous_condition} is satisfied with equality. Then~\cref{eq:previous_claim_2} becomes
    \begin{align*}
        \normm{Q_{i_1, i} z}_2 & + \normm{Q_{i_2, i} z}_2 - \normm{Q_{i, i} z}_2 \\ 
        = & \normm{\bmat{3 \\ 2}}_2 + \normm{\bmat{2 \\ 1}}_2 - \normm{\bmat{5 \\ 3}}_2 \\
        = & \sqrt{13} + \sqrt{5} - \sqrt{34} \approx .01 \geq 0
    \end{align*}
    so~\cref{eq:previous_claim_2} is clearly not true. However, by switching to using $\norm{Qz}$ with arbitrary norms and enforcing $Q = qI$, we achieve the cost function formulation provided in this paper, which results in the correct sufficient condition.
\end{remark}

\begin{remark}
    It is not difficult to extend the condition presented in~\cref{thm:stability_condition} to the case of diagonal matrices when the norms used are the $\ell_1$-norm, \ie, the norms in the cost take the form $\norm{Qy}_1$ where $Q$ is diagonal. In this case, the penalty placed on deviations with respect to position error and velocity error can be weighted differently, which may be desirable in practice. To improve clarity, we elected to stick with a positive scalar weight and the use of arbitrary norms (the equivalent of $\norm{qIx}$).
\end{remark}


\section{EXPERIMENTS}

\subsection{Simulation Experiments}

We simulate a platoon with $N = 50$ follower vehicles and a virtual leader that starts at $p_0(0) = 0\,$m. The leader follows a velocity profile given by
\begin{equation*}
    v_0(t) = \begin{cases}
        20 + t \, \text{m/s} & 0 \leq t \leq 2 \, \text{s} \\
        22 \, \text{m/s} & t > 2 \, \text{s}.
    \end{cases}
\end{equation*}
All experiments use a discrete timestep of $\dt = 0.1\,$s and a prediction horizon of $6$ seconds or $H = 60$. We evaluate only the PF and BD communication topologies for the simulation experiments. Note that under these communication topologies with $\dt = 0.1\,$s and $N = 50$ vehicles, based on~\cref{thm:terminal_state_convergence} it takes 5 seconds or 50 timesteps for the last vehicle's predicted terminal output, $y_{N,t}^p(H)$, to settle on the true desired output relative to the leader.

Experiments using the CDH policy use a desired distance of $5\,$m for $i = 2, \ldots, N$ and $0\,$m for $i = 1$. Experiments using the CTH policy use a time headway of $0.2\,$s and a safety distance of $1\,$m for $i = 2, \ldots, N$ and $0\,$s and $0\,$m for $i = 1$. For all simulation experiments, the vehicles start at the same velocity ($20\,$m/s) and at their desired position with respect to the leader. The dynamics parameter, $\tau_i$, was selected randomly in the range $[0.25, 0.9]$ for each $i$ and each vehicle was constrained to select control inputs in $[-3, 3]\,$m/s$^2$. All vehicles used $q_{i,i} = 1$ and $r_i = 1$. In the PF case, $q_{i,i-1} = 1$ for all $i = 1,\ldots, N$ and in the BD case, $q_{i,i-1} = q_{i,i+1} = 0.5$ for $i = 1, \ldots, N-1$ and $q_{N, N-1} = 1$.

\begin{figure}
    \centering
    \includegraphics{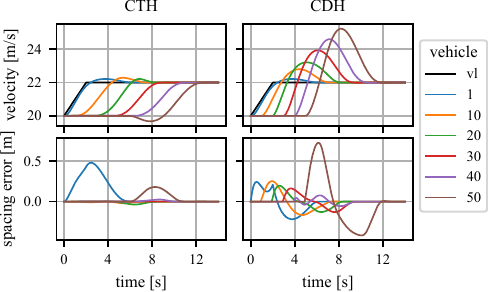}
    \caption{Platoon velocity trajectory and spacing errors for the virtual leader and 6 of the followers from the $N = 50$ vehicles when using a CTH and CDH spacing policy.}
    \label{fig:platoon_velocity_spacing_errs}
    \vspace{-4.0mm}
\end{figure}

We first investigate the use of CTH and CDH spacing policies under a PF topology when using the $\ell_1$-norm in the cost function. The top row of \cref{fig:platoon_velocity_spacing_errs} shows the velocity trajectories of the virtual leader and a subset of the follower vehicles under this scenario. The bottom row shows the spacing errors from each follower vehicle to its predecessor. In the CDH case, the vehicles further from the leader have a tendency to overshoot the desired velocity. In the CTH case, however, the vehicles further from the leader slow down before speeding up and do not overshoot much. We note that this trend was similar for the BD case and $\ell_2$-norm. 

\Cref{fig:platoon_velocity_spacing_errs} also shows that the relative spacing error tends to increase for vehicles further down the platoon (except for vehicle $i=1$ whose error is calculated with respect to the virtual leader). We see that the performance is better in the CTH case compared to the CDH case. We note that if the change to the virtual leader's velocity occurs later (\eg, the virtual leader begins accelerating after $t=5\,$s), the CDH policy performs better since most vehicles can learn the platoon's goal velocity before the leader begins accelerating. This is not practical, however, as it requires the virtual leader (which is likely set by vehicle 1) to know how many vehicles are in the platoon a priori. Moreover, in less predictable scenarios like emergency braking, the lead vehicle may need to decelerate rapidly and will not be able to wait a few seconds to provide follower vehicles enough time to respond to the sudden change in platoon velocity.

\begin{figure}
    \centering
    \includegraphics{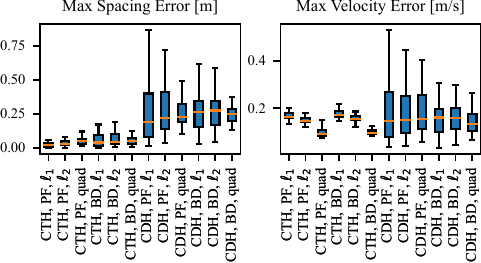}
    \caption{Box plots depicting the spread of the worst errors for each vehicle in the platoon (not including the first vehicle). The notation ``quad'' indicates the weighted squared two norm formulation in the cost.}
    \label{fig:worst-case_box_plot}
    \vspace{-4.0mm}
\end{figure}

To compare more experimental conditions, we investigate the worst performance with respect to spacing and velocity error. For this experiment, we include the use of a quadratic cost (\eg, norms in the cost~\cref{eq:dmpc_cost} are replaced by $||\cdot||_Q^2$ where $Q = qI$). Though this formulation has not been proven stable for general communication topologies (to our knowledge), it has been shown to work well for large platoons~\cite{shaham2024design}. To analyze the performance of CTH vs. CDH, PF vs. BD, and $\ell_1$ vs $\ell_2$ vs quadratic cost (10 possible combinations), we analyzed the maximum errors of each vehicle with respect to predecessor spacing and velocity. The results are shown as a box plot in~\cref{fig:worst-case_box_plot}, where the plot depicts the spread of maximum spacing and velocity errors witnessed by all of the follower vehicles in the platoon. We see that the CTH policy has less variance and performs better with respect to spacing error. Again, we reiterate that if we allowed the virtual leader to start accelerating later, CDH can actually perform as well as or better than CTH. However, this scenario is not always practical.
\begin{figure}[ht]
    \centering
    \includegraphics[width=0.23\textwidth]{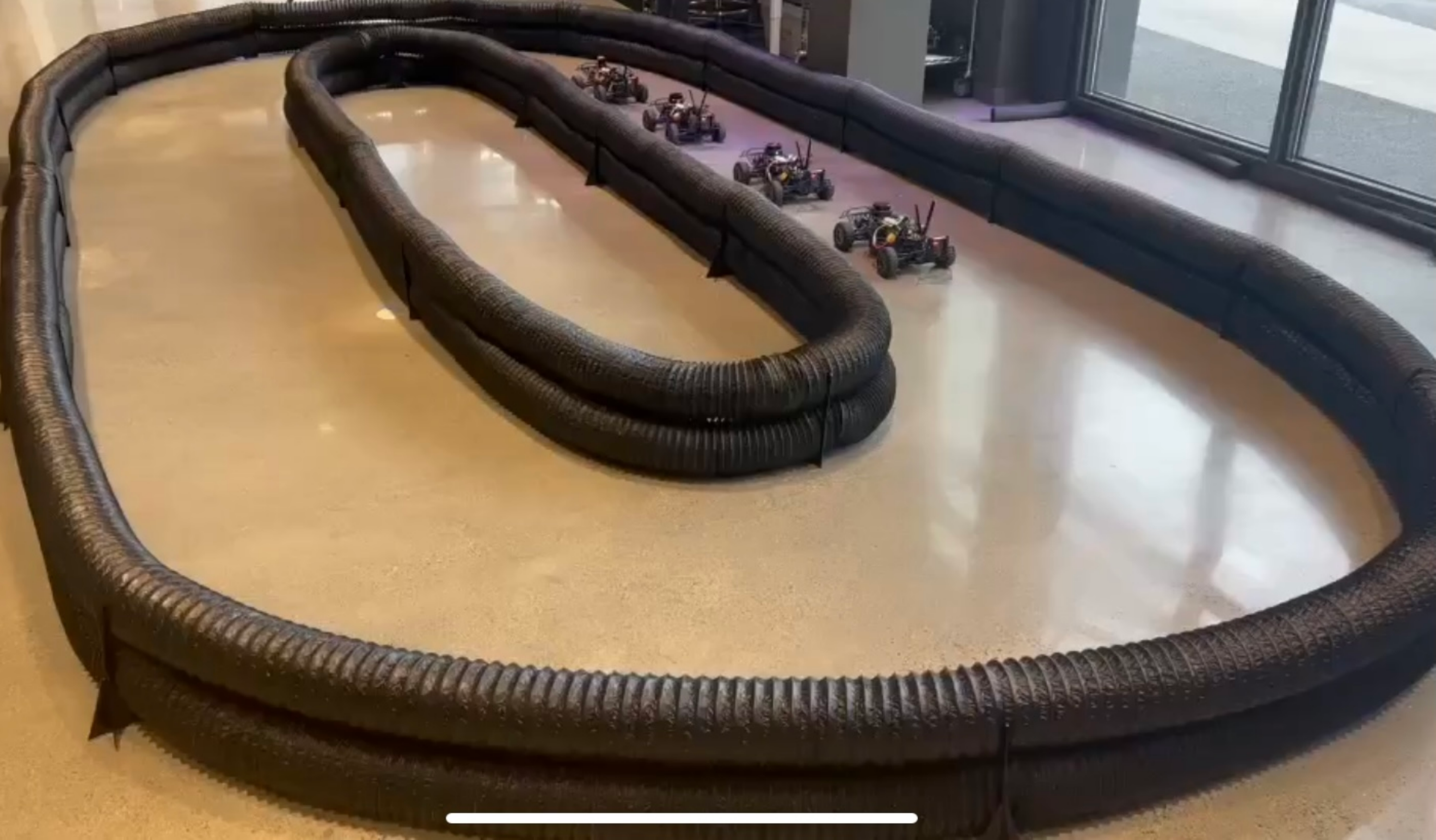}
    \caption{Four F1Tenth vehicles in a $4\,$m $\times$ $8\,$m oval racetrack.}
    \label{fig:convoy_exp}
\end{figure}


\subsection{Hardware Experiments}

To validate the DMPC controllers on hardware, we conduct experiments using four F1Tenth vehicles~\cite{okelly2020f1tenth} in an oval racetrack, seen in \cref{fig:convoy_exp}. The vehicles were modified with ultra-wideband radio frequency (UWB RF) devices to obtain inter-vehicle distance measurements. See~\cite{shaham2024design} for more details on the testing environment and the modifications made to the original F1Tenth platform. Similar to the simulation experiments shown in \cref{fig:platoon_velocity_spacing_errs}, we evaluated the CTH and CDH spacing policies when using a PF communication topology and the $l_1$-norm in the cost function. All algorithm parameters from the previous section remain the same. For the CTH case, we used a time headway of $0.2\,$s and a safety distance of $0.75\,$m. For the CDH case, we used a desired distance of $1\,$m.

\begin{figure}
    \centering
    \includegraphics{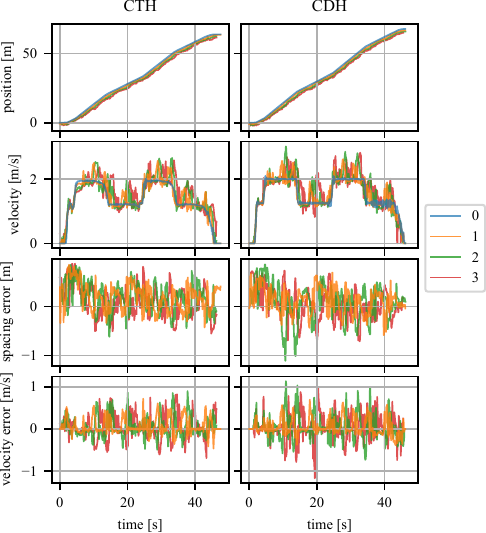}
    \vspace{-2.0mm}
    \caption{Results from two hardware experiments with four F1Tenth vehicles using CTH (left) and CDH (right) under a PF communication topology with the $\ell_1$ norm in the cost function.}
    \label{fig:hardware_results}
    \vspace{-5.0mm}
\end{figure}

Results from the hardware experiment are shown in \cref{fig:hardware_results}. The leader, vehicle 0, follows a velocity profile that alternates between $1.25$ and $2\,$m/s twice before stopping. The top two rows show the platoon position and velocity trajectories for each of the four vehicles. The bottom two rows show the spacing and velocity errors of each follower relative to its predecessor. The performance for CTH and CDH appears similar. The root-mean-square error (RMSE) for the spacing and velocity errors relative to the predecessor are shown in \cref{tab:rmse}. For this experiment with only four vehicles, CDH generally performed better, though the reason for this is unclear. However, we anticipate that in practice, with reasonable choices for CDH and CTH spacing parameters, CTH will offer greater safety because it allows for larger inter-vehicle spacing at higher speeds.

\begin{table}
\vspace{2.0mm}
\caption{Spacing and Velocity RMSE}
\begin{center}
\begin{tabular}{||c||c|c||c|c||}
    \hline
    & \multicolumn{2}{c||}{Spacing RMSE} & \multicolumn{2}{c||}{Velocity RMSE} \\
    \hline
    Vehicle index & CTH & CDH & CTH & CDH \\
    \hline
    1 & 0.33 & 0.2 & 0.29 & 0.2 \\
    \hline
    2 & 0.35 & 0.25 & 0.38 & 0.30 \\
    \hline
    3 & 0.33 & 0.28 & 0.32 & 0.31 \\
    \hline
\end{tabular}
\end{center}
\vspace{-4.0mm}
\label{tab:rmse}
\end{table}


\section{CONCLUSIONS}

This paper introduces a DMPC algorithm for autonomous vehicle platoons with arbitrary communication topologies. The algorithm allows for the use of CTH or CDH spacing policies and can be extended to spacing policies that are affine in the state. Using a Lyapunov argument, the controller is provably asymptotically stable if certain conditions on the weights in the DMPC cost function are satisfied. 

There are a few avenues for potential future research. First, similar to the linear feedback case, it may be possible to develop conditions such that using a DMPC controller with a CTH spacing policy ensures string stability for the platoon. We do not, however, expect this to be the case for a CDH spacing policy as ensuring errors do not propagate without allowing desired distances to change does not seem feasible. Additionally, this algorithm does not assume uncertainty in the state and output variables, and would benefit from explicit handling of uncertainty. Finally, in the real world, communication is lossy, and packets may be dropped. Guaranteeing performance under dropped communication is necessary for real-world implementation.

\addtolength{\textheight}{-12cm}   


\section*{ACKNOWLEDGMENT}

Research was sponsored by the DEVCOM Analysis Center and was accomplished under Cooperative Agreement Number W911NF-22-2-001. The views and conclusions contained in this document are those of the authors and should not be interpreted as representing the official policies, either expressed or implied, of the Army Research Office or the U.S. Government. The U.S. Government is authorized to reproduce and distribute reprints for Government purposes notwithstanding any copyright notation herein.


\bibliographystyle{IEEEtran}
\bibliography{IEEEabrv,refs}

\end{document}